\newtheorem{theorem}{Theorem}[section]
\newtheorem*{theorem*}{Theorem}
\newtheorem{lemma}[theorem]{Lemma}
\newtheorem{proposition}[theorem]{Proposition}
\newtheorem{corollary}[theorem]{Corollary}
\theoremstyle{remark}
\theoremstyle{definition}
\newtheorem{example}[theorem]{Example}
\journal{}
\def\ps@pprintTitle{%
 \let\@oddhead\@empty
 \let\@evenhead\@empty
 \def\@oddfoot{}%
 \let\@evenfoot\@oddfoot}
\begin{document}

\begin{frontmatter}



\title{Description of sup- and inf-preserving aggregation functions via families of clusters in data tables\footnote{Preprint of an article published by Elsevier in the Information Sciences 400-401 (2017), 173-183. It is available online at: \newline www.sciencedirect.com/science/article/pii/S0020025517305510}}


\author[up]{Radom\'ir Hala\v{s}}\ead{radomir.halas@upol.cz}
\author[up,stu]{Radko Mesiar}\ead{radko.mesiar@stuba.sk}
\author[up,sav]{Jozef P\'ocs}\ead{pocs@saske.sk}

\address[up]{Palack\'y University Olomouc, Faculty of Science, Department of Algebra and Geometry, 17. listopadu 12, 771 46 Olomouc, Czech Republic}

\address[stu]{Department of Mathematics and Descriptive Geometry, Faculty of Civil Engineering, Slovak University of Technology in Bratislava, Radlinsk\'eho 11, 810 05 Bratislava 1, Slovakia}

\address[sav]{Mathematical Institute, Slovak Academy of Sciences,\\ Gre\v s\'akova 6, 040 01 Ko\v sice, Slovakia}

\begin{abstract}
Connection between the theory of aggregation functions and formal concept analysis is discussed and studied, thus filling 
a gap in the literature by building a bridge between these two theories, one of them living in the world of data
fusion, the second one in the area of data mining. We show how Galois connections can be used to describe an important class of aggregation
functions preserving suprema, and, by duality, to describe aggregation functions preserving infima. Our discovered method gives an elegant and
complete description of these classes. Also possible applications of our results within certain biclustering fuzzy FCA-based methods are discussed.
\end{abstract}

\begin{keyword}
sup-preserving aggregation function\sep bounded lattice \sep Galois connection.

\MSC 06B99 

\end{keyword}

\end{frontmatter}

\section{Introduction}

Among several theoretical tools applied in data mining, an important role is
played by aggregation functions. Recall, for example, the application of
Choquet and Sugeno integrals and other aggregation functions discussed in
\cite{STKPL}. 
Note that aggregation functions were originally introduced to act on real
intervals, for details we refer the reader to the comprehensive monographs \cite{BPC,Grabisch et al 2009} and \cite{KMP}. 

However, recently the
aggregation on posets, and in particular on lattices, has became a rapidly
growing topic, especially due to applications in information sciences,
see e.g. \cite{Couceiro}, \cite{Grabisch et al 2009} etc.
This trend was
stressed at the international conference ABLAT (Aggregation on Bounded
LATtices) held in 2014 in Trabzon, Turkey, among others. Aggregation on
lattices exploiting several algebraic results is often bringing new lights
also into the standard real-valued aggregation techniques. For example, though
the Sugeno integral was introduced in 1974 \cite{Sug74}, and discussed in
many papers and monographs, only recently, based on algebraic look, it was
shown that it is, in fact, an aggregation function which preserves congruences
\cite{HMP1, HMP3}. This allows to apply the Sugeno integral consistently also when
we change a numerical scale into linguistic scale, for example. Also quite recently, 
the essential progress with respect to understanding how aggregation functions can be generated 
has been achieved, see \cite{HMP2,HP1,HP2}. For more details
on recent applications of aggregation functions we refer the reader to \cite{T1,T2,Kara2,T4,Kara1,Lei,MSY,MY,MZ1,Ze1,MZ2}.

Another useful method used in data mining during last decades is Formal Concept Analysis (FCA, in short) \cite{GW}. It is a theory where data are analysed by means of conceptual structures among data sets. Mathematically,  FCA is based on the notion of a formal context which is represented by two sets, objects and attributes, and by a binary relation between the set of objects and the set of attributes representing the relationship between them. As a result of the process, we obtain so-called formal concepts
which correspond to the maximal rectangles in the data set. The set of formal concepts has a structure of a complete lattice (called concept lattice) consisting of all conceptual abstractions (concepts) combining subsets of objects with subsets of shared  attributes. FCA has been proved to be an effective tool in many areas of science, besides decision making it has been extensively applied to fields such as knowledge discovery, information retrieval, software engineering etc.

Classical FCA method is used for a binary case where we can sharply decide whether or not a given object has a given attribute. On the other hand, there are many natural practical examples where this most simple case is not appropriate. Namely, in many concrete situations the relation between objects and attributes is not crisp, and then many-valued or fuzzy description of object-attribute model is more convenient. Therefore, when dealing with imprecise data, uncertainty or even when the information is not complete, this more general setting of FCA method has become an important research topic in the recent years. For more details we recommend the reader the papers \cite{AL,AK,B01,BuPPINS,MedN3,HP3,Kardos,Kr1,MedN2,MedN1,MO,Med09,Pocs,P3}. 

Concept lattices can be viewed from another important equivalent way. Namely, they correspond to Galois-closed sets with respect 
to a Galois connection, both in monotone as well as in antitone setting, induced by the incidence relation between the sets of objects and attributes. Remark that Galois connections play a fundamental role in mathematics because of their universality. 
To be more specific, Galois connections represent a structure-preserving passage between two worlds, the one living on the object side, the second on the attribute side. Consequently, they are inherent with respect to human thinking in a sense that they allow to closely connect certain quite different worlds of hierarchical structures. Order-theoretically, Galois connections consist of two order-preserving maps whose composition yields two operators on the respective structures, one closure operator, and the second being a kernel operator. In other words, the two hierarchies living in two different worlds can be transported to each other. Such an adjoint situation has an advantage that the knowledge about one of the worlds can be used to gain the information in the second one. Remind that historically, the classical Galois theory has been used for solving the problem of solvability of algebraic equations. Besides this one can find many other applications of Galois-correspondences in almost all branches of mathematics and applied science. As an example, recall the link between conjunctions and
implications in fuzzy logics \cite{Haj}. For more details on Galois connections we recommend the survey paper by Ern\'e et al. \cite{Erne}.

By authors' knowledge, so far no essential connection between the theory of aggregation functions and FCA has been developed. The purpose of this paper is to fill this gap by building a bridge between these two theories, one of them living in the world of data fusion, the second one in the area of data mining. 
Our aim is to show how Galois connections can be used to describe
important classes of aggregation functions. Note that the majority of
aggregation functions exploited in applications are considered to preserve
suprema (e.g., for maximization problems on real intervals) or infima (e.g.,
for minimization problems). Moreover, on real intervals, simultaneous preservation
of suprema and infima by an aggregation function means its continuity. In our
paper, we focus on the case of sup-preserving aggregation functions only,
while the related results for inf-preserving aggregation functions can be
obtained by duality. As shown below, our method
gives an elegant and complete description of this class.

\section{Preliminaries}

We assume that the reader is familiar with the basic notions and terminology of partially ordered sets, especially lattice theory, cf. \cite{Blyth} or \cite{Gratzer}.

The direct product of an indexed system $\{L_i\mid i\in I\}$ of lattices is defined in the usual way; we apply the notation $\prod_{i\in I} L_i$. If $L_i=L$ for all $i\in I$, the symbol $L^I$ denotes the direct power of a lattice $L$ or we use $L^n$ provided $I=\{1,\dots,n\}$. The elements of the direct product will be denoted by the bold symbols and for $\mathbf{x}\in \prod_{i\in I} L_i$, $\mathbf{x}(i)$ is the $i$-th component of $\mathbf{x}$ in the lattice $L_i$.
Recall that the direct product of lattices forms a complete lattice if and only if all of them are complete lattices.

Given a partially ordered set $P$, a \textit{closure operator} on $P$ is a self-map $c\colon P\to P$ which is monotone, extensive and idempotent. More precisely, this means the following conditions for all $x,y\in P$:
\begin{enumerate}
\item $c(x)\leq c(y)$ provided $x\leq y$ 
\item $x\leq c(x)$
\item $c(x)=c(c(x))$.
\end{enumerate}
The notion of an interior operator is defined dually, i.e., an \textit{interior operator} on $P$ is a mapping $i\colon P\to P$ which is monotone, intensive (i.e., $i(x)\leq x$ for all $x\in P$) and idempotent. 

In the case of complete lattices, the notions of closure and interior operators are closely related to that of closure systems and interior systems, respectively. Given a complete lattice $L$, a \textit{closure system} on $L$ is a subset $S\subseteq L$ closed under arbitrary infima, i.e., 
$$ X\subseteq S\quad \Longrightarrow\quad \bigwedge X \in S.$$ 
Dually, an \textit{interior system} on $L$ is a subset $T\subseteq L$ closed under arbitrary suprema, i.e.,
$$ X\subseteq T\quad \Longrightarrow\quad \bigvee X \in T.$$ 
Let us remark that as $\bigwedge \emptyset = 1$ and $\bigvee \emptyset =0$ holds in any complete lattice, the top element $1$ belongs to every closure system and analogously, every interior system contains the bottom element $0$.

It is the well-known fact that every closure operator gives rise to a closure system and vice versa. 
For a closure system $S$ on $L$, one can define $c_S\colon L\to L$ by $c_S(x)=\bigwedge\{y\in S\mid x\leq y\}$. Conversely, given a closure operator $c$ on $L$, the set $\mathrm{Fix}(c)=\{x\in L\mid c(x)=x\}$ of fixed points forms a closure system on $L$. In this case $\mathrm{Fix}(c_S)=S$ and $c_{\mathrm{Fix}(c)}=c$.  

A similar correspondence holds between interior operators and interior systems, in this case $i_T(x)=\bigvee \{y\in T\mid y\leq x\}$ represents the interior operator corresponding to an interior system~$T$.

Further, we recall the definition and basic properties of monotone Galois connections. The results presented in this section can be found in several sources, however in a non-compact form, c.f. \cite{Blyth} or \cite{Erne}. In order to make the paper as self-contained as possible, we provide the necessary results in a modified comprehensive form together with their proofs.

Let $P$, $Q$ be two posets. We say that a pair of mappings $(f,g)$,  $f\colon P\to Q$ and $g\colon Q\to P$ forms a \textit{monotone Galois connection} if for all $x\in P$ and $y\in Q$ it holds
\begin{equation}\label{eq_res}
f(x)\leq y \quad\quad\mbox{iff}\quad\quad x\leq g(y).
\end{equation} 
Then $f$ is called the \textit{lower adjoint} of $g$, while $g$ is referred to as the \textit{upper adjoint} of $f$. Note that given a mapping $f$, there is at most one upper adjoint $g$ satisfying \eqref{eq_res}. To see this, consider two such mappings $g_1$ and $g_2$. From \eqref{eq_res} we immediately infer $x\leq g_1(y)$ if and only if $x\leq g_2(y)$, i.e., the sets of lower bounds of $g_1(y)$ and $g_2(y)$ coincide, implying that $g_1(y)=g_2(y)$. The uniqueness of the lower adjoint (if it exists) corresponding to a given mapping $g$ can be shown similarly.

Let $(f,g)$ be a monotone Galois connection between $P$ and $Q$. Then $f$ and $g$ are both monotone and they satisfy $ x\leq g(f(x))$ and $f(g(y))\leq y$ for all $x\in P$ and $y\in Q$. Indeed, from \eqref{eq_res} we easily obtain that $f(x)\leq f(x)$ implies $x\leq g(f(x))$. Consequently, applying \eqref{eq_res} again, the inequality $x_1\leq x_2\leq g(f(x_2))$ yields $f(x_1)\leq f(x_2)$, i.e., $f$ is monotone. 

Consequently, we obtain the following important property: 
\begin{equation}\label{e2}
f\circ g\circ f=f \quad\quad\mbox{and}\quad\quad g\circ f\circ g=g.
\end{equation} 
To see this, using \eqref{eq_res}, from $g(f(x))\leq g(f(x))$ we obtain $f(g(f(x)))\leq f(x)$. On the other hand, $x\leq g(f(x))$ and the monotonocity of $f$ yield $f(x)\leq f(g(f(x)))$.

Let $L$ be a complete lattice. For an element $a\in L$, $(a\rangle$ denotes the principal ideal generated by $a$, i.e., $(a\rangle=\{x\in L\mid x\leq a\}$. Dually, $\langle a)=\{x\in L\mid a\leq x\}$ denotes the principal filter generated by $a$. A subset $L_1\subseteq L$ is called hereditary, or a down-set, if for every $x \in L $ and $x_1\in L_1$ with $x\leq x_1$ we have $x\in L_1$. The concept of an up-set is defined dually.

\begin{proposition}\label{prop1}
Let $L$ and $M$ be complete lattices and $f\colon L\to M$ be a mapping. The following conditions are equivalent.
\begin{enumerate}
\item $f$ is $\bigvee$-preserving.
\item The inverse image $f^{-1}\big((a\rangle\big)\subseteq L$ is a down-set for every $a\in M$.
\item There exists an upper adjoint mapping $g\colon M\to L$ of $f$.  
\end{enumerate}
\end{proposition}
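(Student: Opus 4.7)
The plan is to establish the equivalence by the cyclic chain (1) $\Rightarrow$ (2) $\Rightarrow$ (3) $\Rightarrow$ (1). For (1) $\Rightarrow$ (2) I would first extract monotonicity from $\bigvee$-preservation: given $x\leq y$ one has $f(y)=f(x\vee y)=f(x)\vee f(y)$, hence $f(x)\leq f(y)$. Monotonicity alone makes each $f^{-1}\bigl((a\rangle\bigr)$ a down-set. Moreover, if $X\subseteq f^{-1}\bigl((a\rangle\bigr)$ then $f(\bigvee X)=\bigvee f(X)\leq a$, so $\bigvee X$ also lies in $f^{-1}\bigl((a\rangle\bigr)$; in particular $\bigvee f^{-1}\bigl((a\rangle\bigr)$ is the maximum element of that set.

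For (2) $\Rightarrow$ (3) I would set $g(a):=\bigvee f^{-1}\bigl((a\rangle\bigr)$. Provided $g(a)$ is the maximum of the preimage, the chain $f(x)\leq a \iff x\in f^{-1}\bigl((a\rangle\bigr) \iff x\leq g(a)$ is precisely the adjunction \eqref{eq_res}, and $g$ is the upper adjoint of $f$. The subtle point here is that a generic down-set in a complete lattice need not contain its own supremum; condition (2) must therefore be read as tacitly including the closure-under-suprema property (equivalently, that $f^{-1}\bigl((a\rangle\bigr)$ is a principal down-set), and this stronger property is exactly what the (1) $\Rightarrow$ (2) direction supplies. This step is where I expect the only real difficulty, since without the maximum-element guarantee one produces no well-defined adjoint.

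For (3) $\Rightarrow$ (1) the adjunction suffices formally. For $X\subseteq L$ and any $a\in M$,
\begin{align*}
\bigvee f(X)\leq a &\iff f(x)\leq a \text{ for all } x\in X \\
&\iff x\leq g(a) \text{ for all } x\in X \\
&\iff \bigvee X\leq g(a) \iff f\bigl(\bigvee X\bigr)\leq a,
\end{align*}
so $\bigvee f(X)$ and $f\bigl(\bigvee X\bigr)$ have identical sets of upper bounds in $M$ and therefore coincide. Beyond the subtlety in reading (2), every implication is a short manipulation of the defining inequalities for a monotone Galois connection.
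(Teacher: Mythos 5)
Your proof is correct and follows essentially the same route as the paper's: monotonicity plus closure under suprema makes each preimage $f^{-1}\big((a\rangle\big)$ a principal ideal $(g(a)\rangle$, which supplies the upper adjoint, and the adjunction then gives $\bigvee$-preservation. The subtlety you flag about condition (2) matches the paper's intended reading --- its proof of $(1)\Rightarrow(2)$ likewise establishes the stronger principal-ideal property and then uses exactly that in $(2)\Rightarrow(3)$.
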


\begin{proof}
$(1)\Rightarrow (2):$ Let $a\in M$ and $x\in f^{-1}\big((a\rangle\big)=U$. As $f$ is $\bigvee$-preserving, it is monotone. Thus, for $x_1\leq x$ we obtain $f(x_1)\leq f(x)\in (a\rangle$, which yields that $x_1\in U$, i.e., the set $U$ is hereditary. Further, put $b=\bigvee_{x\in U} x$. As $f(x)\leq a$ for all $x\in U$, we obtain 
$$ f(b)=f(\bigvee_{x\in U} x)=\bigvee_{x\in U}f(x)\leq a,$$
showing that $b\in U$. Since $U$ is a down-set and $b$ is its greatest element, we have $U=(b\rangle$.

$(2)\Rightarrow (3):$ For $a\in M$ put $g(a)=b$ where $b\in L$ is such that $(b\rangle=f^{-1}\big((a\rangle\big)$. Then obviously 
$$f(x)\leq y \quad \mbox{iff} \quad x\in f^{-1}\big((y\rangle\big) \quad\mbox{iff}\quad x\leq g(y).$$   

$(3)\Rightarrow (1):$ Assume that $f$ and $g$ fulfill \eqref{eq_res} and let $\{x_i\mid i\in I\}\subseteq L$ be a family of elements. We have already shown that $f$ is monotone. Then $x_i\leq \bigvee_{i\in I} x_i$ for all $i\in I$ and the monotonicity of $f$ yields  
$$\bigvee_{i\in I} f(x_i)\leq f\big(\bigvee_{i\in I} x_i\big).$$
Conversely, denoting $y=\bigvee_{i\in I} f(x_i)$, we have $f(x_i)\leq y$ for each $i\in I$. 
However, by \eqref{eq_res} this holds if and only if
$$ x_i\leq g(y), \forall i\in I \quad\mbox{iff}\quad \bigvee_{i\in I}x_i \leq g(y)\quad \mbox{iff}\quad f\big(\bigvee_{i\in I}x_i\big)\leq y,$$
which shows that $f$ is $\bigvee$-preserving.
\end{proof}

Using the dual arguments, one can similarly prove the following. 

\begin{proposition}\label{prop2}
Let $L$ and $M$ be complete lattices and $g\colon M\to L$ be a mapping. Then the following conditions are equivalent:
\begin{enumerate}
\item $g$ is $\bigwedge$-preserving.
\item The inverse image $g^{-1}\big(\langle b)\big)\subseteq M$ is an up-set for every $b\in L$.
\item There exists the lower adjoint mapping $f\colon L\to M$ of $g$.
\end{enumerate}
\end{proposition}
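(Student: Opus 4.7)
The strategy is to mirror the proof of Proposition 1 by systematically reversing the order: replace suprema by infima, principal ideals $(a\rangle$ by principal filters $\langle b)$, down-sets by up-sets, and swap the roles of lower and upper adjoints. Equivalently, one could invoke Proposition 1 applied to $g$ regarded as a map between the order-duals $M^{\partial}$ and $L^{\partial}$, since a $\bigwedge$-preserving map between $M$ and $L$ is precisely a $\bigvee$-preserving map between $M^{\partial}$ and $L^{\partial}$, and an upper adjoint in the original setting becomes a lower adjoint in the dual. I would prefer to mirror the argument directly, since the dualization is mechanical and each of the three implications is short.

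For $(1)\Rightarrow(2)$, I would fix $b\in L$ and set $V=g^{-1}(\langle b))$. Because $g$ is $\bigwedge$-preserving it is monotone, so $x_1\in V$ and $x_1\le x_2$ force $b\le g(x_1)\le g(x_2)$, placing $x_2$ in $V$; hence $V$ is an up-set. Putting $a=\bigwedge V$, meet-preservation gives $g(a)=\bigwedge_{x\in V}g(x)\ge b$, so $a\in V$, and $V$ is in fact the principal filter $\langle a)$. For $(2)\Rightarrow(3)$, I would define $f(b):=a$, where $a$ is the least element of $g^{-1}(\langle b))$; the desired equivalence $f(b)\le y \Leftrightarrow b\le g(y)$ then follows from $y\in g^{-1}(\langle b)) \Leftrightarrow a\le y$. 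For $(3)\Rightarrow(1)$, I would invoke the general adjointness facts recalled before Proposition 1: $g$ is monotone, so $g(\bigwedge_{i}y_i)\le \bigwedge_i g(y_i)$; conversely, setting $x=\bigwedge_i g(y_i)$ gives $x\le g(y_i)$, whence $f(x)\le y_i$ for every $i$, so $f(x)\le \bigwedge_i y_i$, and the Galois connection yields $x\le g(\bigwedge_i y_i)$.

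The only point requiring care — and it is the same point already implicit in the proof of Proposition 1 — is that in $(2)\Rightarrow(3)$ one needs the up-set $g^{-1}(\langle b))$ actually to possess a least element. This does not literally follow from condition (2) alone, but it is supplied by the stronger output of the preceding implication, where $g^{-1}(\langle b))$ is shown to be a principal filter. Once this observation is made, there is no substantive obstacle, and the whole argument is a routine order-theoretic dualization of Proposition 1.
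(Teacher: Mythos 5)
Your proof is correct and matches the paper's: the paper disposes of Proposition~2 with the single remark that it follows ``using the dual arguments'' from Proposition~1, and the explicit dualization you carry out (filters for ideals, up-sets for down-sets, infima for suprema, lower adjoint for upper adjoint) is exactly that argument written out. Your caveat that $(2)\Rightarrow(3)$ needs the up-set $g^{-1}\big(\langle b)\big)$ to actually possess a least element is a fair observation, but it is an imprecision inherited verbatim from the paper's own statement and proof of Proposition~1 (where the down-set must likewise be a principal ideal), not a defect introduced by your argument.
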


Hence, as a conclusion of the above propositions we obtain that given a $\bigvee$-preserving mapping $f\colon L\to M$, there is the unique $\bigwedge$-preserving mapping $g\colon M\to L$ such that $f$ and $g$ form a monotone Galois connection. 

For a mapping $f\colon L\to M$, let $\mathrm{Rng}(f)=\{f(x)\colon x\in L\}$ denotes its range.
The following important assertion provides an inner description of monotone Galois connections. 

\begin{proposition}\label{prop3}
Let $L$ and $M$ be complete lattices and $f\colon L\to M$, $g\colon M\to L$ be two mappings between them. If the pair $(f,g)$ forms a monotone Galois connection then 
\begin{enumerate}
\item the range $\mathrm{Rng}(f)$ is an interior system on $M$, 
\item the range $\mathrm{Rng}(g)$ is a closure system on $L$, 
\item $\mathrm{Rng}(f)$ and $\mathrm{Rng}(g)$ are isomorphic posets. 
\end{enumerate}

Conversely, let $S$ be a closure system on $L$ with the corresponding closure operator $c_S$, $T$ an interior system on $M$ with the corresponding interior operator $i_T$ and $\varphi\colon S\to T$ an isomorphism. Then the mappings $c_S\circ \varphi\colon L\to M$, $i_T\circ\varphi^{-1}\colon M\to L$ form a monotone Galois connection.
\end{proposition}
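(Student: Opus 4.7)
The plan splits naturally along the "if" and "only if" halves; both are fairly direct given the machinery already in place, so the main task is bookkeeping rather than overcoming a genuine obstacle.

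For the forward direction, I would first record that since $(f,g)$ is a monotone Galois connection, Propositions~\ref{prop1} and~\ref{prop2} give $f$ is $\bigvee$-preserving and $g$ is $\bigwedge$-preserving, and equation~\eqref{e2} holds. Claim (1) then falls out almost immediately: given a family $\{f(x_i)\}_{i\in I}\subseteq\mathrm{Rng}(f)$, the supremum $\bigvee_{i\in I}f(x_i)=f(\bigvee_{i\in I}x_i)$ again lies in $\mathrm{Rng}(f)$; taking $I=\emptyset$ also ensures $0_M=f(0_L)\in\mathrm{Rng}(f)$, so $\mathrm{Rng}(f)$ is an interior system. Claim (2) is proved dually, using that $g$ preserves arbitrary infima.

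For claim (3), I would restrict the adjoints to the ranges and show $f|_{\mathrm{Rng}(g)}\colon\mathrm{Rng}(g)\to\mathrm{Rng}(f)$ and $g|_{\mathrm{Rng}(f)}\colon\mathrm{Rng}(f)\to\mathrm{Rng}(g)$ are mutually inverse order-preserving bijections. Monotonicity is inherited from $f,g$. If $x=g(y)\in\mathrm{Rng}(g)$, then by \eqref{e2}, $g(f(x))=g(f(g(y)))=g(y)=x$; symmetrically $f(g(y'))=y'$ for $y'\in\mathrm{Rng}(f)$. Hence the restrictions are inverse bijections, and being monotone with monotone inverses they constitute an order isomorphism between $\mathrm{Rng}(g)$ and $\mathrm{Rng}(f)$.

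For the converse, interpret the stated maps as $F\colon L\to M$, $F(x)=\varphi(c_S(x))$, and $G\colon M\to L$, $G(y)=\varphi^{-1}(i_T(y))$, and verify directly the equivalence $F(x)\le y \iff x\le G(y)$. The verification rests on two elementary universal properties: for any $s\in S$, $x\le s\iff c_S(x)\le s$ (since $c_S(x)$ is the least element of $S$ above $x$); and dually, for any $t\in T$, $t\le y\iff t\le i_T(y)$. Chaining these with the fact that $\varphi$ is an order isomorphism between $S$ and $T$ gives
\[
F(x)\le y \iff \varphi(c_S(x))\le i_T(y) \iff c_S(x)\le \varphi^{-1}(i_T(y)) \iff x\le G(y),
\]
which is precisely \eqref{eq_res}. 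The only step requiring care is the first equivalence, where one must note that $\varphi(c_S(x))\in T$ so that the interior-system characterisation of the order on $T$ applies; after that, everything is purely formal. No substantive obstacle is expected — the argument is essentially an unpacking of definitions once Propositions~\ref{prop1} and~\ref{prop2} and the identities \eqref{e2} are in hand.
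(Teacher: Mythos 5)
Your proof is correct and follows essentially the same route as the paper: ranges are interior/closure systems because $f$ preserves suprema and $g$ preserves infima, the isomorphism in (3) comes from restricting the adjoints to the ranges and using $f\circ g\circ f=f$, $g\circ f\circ g=g$, and the converse is verified by the same chain of equivalences through $c_S(x)\leq\varphi^{-1}(i_T(y))$ and $\varphi(c_S(x))\leq i_T(y)$. The only difference is cosmetic: you spell out the empty-supremum case and the appeal to Propositions~\ref{prop1} and~\ref{prop2}, which the paper leaves implicit.
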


\begin{proof}
Obviously, $\mathrm{Rng}(f)$ is an interior system on $M$ since $f$ is $\bigvee$-preserving. Similarly, $\mathrm{Rng}(g)$ is a closure system on $L$. Further, as $f$ and $g$ satisfy \eqref{e2}, we have $f(g(y))=y$ for all $y\in T$ as well as $g(f(x))=x$ for all $x\in S$. Thus, we obtain that $f$ restricted to $S$ and $g$ restricted to $T$ are mutually inverse. As both these mappings are monotone, $S$ and $T$ are isomorphic posets.

Conversely, assume that $S$ and $T$ are isomorphic. Since $\varphi^{-1}(i_T(y))\in S$ and $\varphi(c_S(x))\in T$, from the basic properties of closure and interior operators we obtain 
$$ x\leq \varphi^{-1}(i_T(y))\quad \mbox{iff} \quad c_S(x)\leq c_S(\varphi^{-1}(i_T(y)))=\varphi^{-1}(i_T(y)),$$ which is equivalent to 
$$\varphi(c_S(x))=i_T(\varphi(c_S(x)))\leq i_T(y) \quad \mbox{iff} \quad \varphi(c_S(x))\leq y.$$

\end{proof}

In the sequel, if a pair of isomorphic closure-interior systems is considered, it is implicitly assumed that some isomorphism between them is also present. Moreover, we will consider ($n$-ary) aggregation functions
$f\colon L^{n}\to L$, i.e., functions which are characterized by the monotonicity and
boundary conditions. Hence $f$ is an aggregation function whenever $f(x_1,\dots,x_n)\leq
f(y_1,\dots,y_n)$ if $x_1\leq y_1,\dots,x_n\leq y_n$, and $f(0,\dots,0)=0, f(1,\dots,1)=1$. In
particular, each homomorphism $f\colon L^{n}\to L$ is an aggregation function.

\section{Sup-preserving and inf-preserving aggregation functions}

Recall that any aggregation function $f$ on a complete lattice $L$ fulfills the boundary conditions 
$$ f(0,\dots,0)=0 \quad\mbox{and}\quad f(1,\dots,1)=1,$$ 
where $0$ and $1$ denote the bottom and the top element of the lattice $L$. In order to apply Proposition \ref{prop3}, these conditions give the following basic characterization: 

\begin{lemma}\label{lem1}
Let $L$ be a complete lattice and $f\colon L^n\to L$ be an aggregation function.
Then $f$ is $\bigvee$-preserving if and only if $f(\mathbf{x})=\varphi(c_S(\mathbf{x}))$ for all $\mathbf{x}\in L^n$, where $\varphi\colon S\to T$ is an isomorphism between a closure system $S\subseteq L^n$ and an interior system $T\subseteq L$ such that $1\in T$, and $c_S\colon L^n\to L^n$ is the closure operator corresponding to $S$.

Similarly, $f$ is $\bigwedge$-preserving if and only if $f(\mathbf{x})=\varphi(i_T(\mathbf{x}))$ for all $\mathbf{x}\in L^n$, where $\varphi\colon T\to S$ is an isomorphism between an interior system $T\subseteq L^n$ and a closure system $S\subseteq L$ such that $0\in S$, and $i_T\colon L^n \to L^n$ is the interior operator corresponding to $T$. 
\end{lemma}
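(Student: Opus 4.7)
The plan is to read the statement as a direct application of Propositions \ref{prop1} and \ref{prop3}, where the only nontrivial input beyond those propositions is the boundary conditions of an aggregation function, which translate into the conditions $1\in T$ and $0\in S$.

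For the forward direction of the sup-preserving case, I would proceed as follows. Assume $f\colon L^n\to L$ is a $\bigvee$-preserving aggregation function. By Proposition \ref{prop1} there exists an upper adjoint $g\colon L\to L^n$, so $(f,g)$ is a monotone Galois connection. Proposition \ref{prop3} then yields that $S:=\mathrm{Rng}(g)$ is a closure system on $L^n$, $T:=\mathrm{Rng}(f)$ is an interior system on $L$, and the restriction $\varphi:=f|_S\colon S\to T$ is an order-isomorphism with inverse $g|_T$. Because $f$ is an aggregation function, $f(1,\dots,1)=1$, hence $1\in\mathrm{Rng}(f)=T$, giving the required condition. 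It is a standard Galois-connection fact (already essentially contained in \eqref{e2}) that $g\circ f$ is the closure operator associated with $S$, i.e.\ $g\circ f=c_S$; thus $\varphi(c_S(\mathbf{x}))=f(g(f(\mathbf{x})))=f(\mathbf{x})$ by \eqref{e2}.

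For the converse, suppose $f=\varphi\circ c_S$ with $\varphi\colon S\to T$ an isomorphism from a closure system $S\subseteq L^n$ to an interior system $T\subseteq L$ with $1\in T$. The converse part of Proposition \ref{prop3} exhibits $f$ as the lower adjoint of $i_T\circ\varphi^{-1}$, whence by Proposition \ref{prop1} $f$ is $\bigvee$-preserving. It remains to check the boundary conditions of an aggregation function. Preservation of suprema and $(0,\dots,0)=\bigvee\emptyset$ in $L^n$ yield $f(0,\dots,0)=\bigvee\emptyset=0$. For the top, observe that every closure system contains the top element, so $(1,\dots,1)\in S$ and $c_S(1,\dots,1)=(1,\dots,1)$; moreover the assumption $1\in T$ together with $T$ being an interior system (hence closed under suprema) forces $1$ to be the top of $T$. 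Since $\varphi$ is an order-isomorphism between the posets $S$ and $T$, it sends the top of $S$ to the top of $T$, so $f(1,\dots,1)=\varphi(1,\dots,1)=1$. Monotonicity is automatic from $\bigvee$-preservation.

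The $\bigwedge$-preserving statement is then obtained by the dual argument, using Proposition \ref{prop2} in place of Proposition \ref{prop1}, swapping the roles of closure/interior systems, and replacing the condition $1\in T$ by $0\in S$ to account for the dual boundary condition $f(0,\dots,0)=0$. I do not anticipate a genuine obstacle here; the only point that requires a small argument, rather than a direct citation, is the translation of the boundary conditions into the asymmetric requirements $1\in T$ (resp.\ $0\in S$), which relies on the fact that an order-isomorphism between a closure system and an interior system automatically identifies their top (resp.\ bottom) elements.
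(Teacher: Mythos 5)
Your proposal is correct and follows essentially the same route as the paper's proof: both directions reduce to Propositions \ref{prop1} and \ref{prop3}, with the boundary conditions translated into $1\in T$ (resp.\ $0\in S$) via the fact that the isomorphism $\varphi$ matches top with top and bottom with bottom. The only cosmetic differences are that you make explicit the identity $g\circ f=c_S$ (which the paper leaves implicit in its appeal to Proposition \ref{prop3}) and that you obtain $f(0,\dots,0)=0$ from preservation of the empty supremum rather than from $\varphi$ sending $\bigwedge S$ to the bottom of $T$; both are valid under the paper's conventions.
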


\begin{proof}
Assume that $f\colon L^n\to L$ is a $\bigvee$-preserving aggregation function. According to Proposition \ref{prop3}, $f$ is determined by isomorphic closure system $S\subseteq L^n$ and interior system $T\subseteq L$ via some isomorphism $\varphi\colon S\to T$. As $f(1,\dots,1)=1$ and $(1,\dots,1)\in S$ (every closure system contains the top element), we obtain $$f(1,\dots,1)=\varphi(c_S(1,\dots,1))=\varphi(1,\dots,1)=1$$
and it follows that $1\in T$. 

Conversely, assume that $f(\mathbf{x})=\varphi(c_S(\mathbf{x}))$. Obviously, $f$ is $\bigvee$-preserving. Since the closure $c_S(0,\dots,0)$ equals to $\bigwedge S$, the least element of $S$, we obtain
$$ f(0,\dots,0)=\varphi(c_S(0,\dots,0))=\varphi(\bigwedge S)=0.$$ Note that the last equality follows from the fact that $\varphi$ maps the least element of $S$ into the least element of $T$, which is $0$. Dually, for the greatest element of $L^n$ we obtain 
$$ f(1,\dots,1)=\varphi(c_S(1,\dots,1))=\varphi(1,\dots,1)=1,$$ since $\varphi$ maps the greatest element of $S$ into the greatest element $1\in T$. 

The assertion concerning the $\bigwedge$-preserving aggregation functions can be proved analogously. 
\end{proof}

Although the previous lemma provides the basic inner characterization of the $\bigvee$-preserving as well as the $\bigwedge$-preserving mappings, closure and interior systems on a direct power $L^n$ need not be so transparent. In what follows we try to find a similar characterization, however with respect to possibly more simple factors.

Obviously, any $\bigvee$-preserving aggregation function $f\colon L^n \to L$ is decomposable in the following way 
\begin{equation}\label{eq2a}
 f(\mathbf{x})=\bigvee_{i=1}^n f_i(\mathbf{x}(i)), \quad\mbox{for all}\ \mathbf{x}\in L^n,
\end{equation} 
where $f_i\colon L\to L$ for all $i\in \{1,\dots,n\}$ is $\bigvee$-preserving. In this case, for each $i\in \{1,\dots,n\}$ the function $f_i$ is the lower adjoint of $g\circ \pi_i\colon L \to L$, where $g$ is the upper adjoint of $f$ and $\pi_i\colon L^n\to L$ denotes the $i$-th projection.

Conversely, given any system of $\{f_i\colon L\to L\mid 1\leq i\leq n\}$ of $\bigvee$-preserving mappings, the formula \eqref{eq2a} determines a $\bigvee$-preserving function $f$. In addition, $f$ being an aggregation function, the functions of the system have to satisfy $\bigvee_{i=1}^n f_{i}(1)=1$. 
Hence, there is a one-to-one correspondence between the family of all $n$-ary $\bigvee$-preserving aggregation functions on a complete lattice $L$ and the family whose elements consist of $n$ pairs of closure-interior system pairs $\{(S_i,T_i)\mid 1\leq i\leq n\}$ on $L$, satisfying $\bigvee_{i=1}^n \top_i=1$, where $\top_i$ is a greatest element in $T_i$.

Notice that a similar characterization can be also applied for $\bigwedge$-preserving aggregation functions.

\begin{example}\label{ex1}
Consider the six-element lattice $L=\{0,1,a,b,c,d\}$ whose Hasse diagram is depicted in Fig. \ref{fig1}. 

\begin{figure}
\begin{center}
\includegraphics[scale=1]{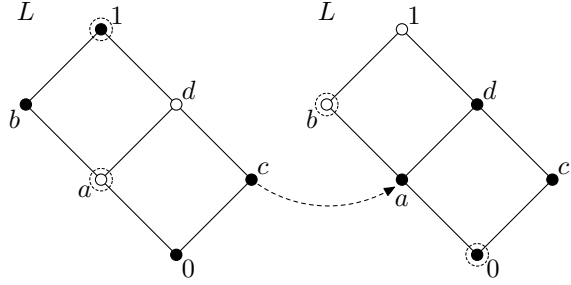}
\caption{An example of isomorphic closure-interior systems.}
\label{fig1}
\end{center}
\end{figure}

In order to generate a binary $\bigvee$-preserving aggregation function, consider two pairs of isomorphic closure-interior systems $(S_1,T_1)$ and $(S_2,T_2)$ on $L$. In this case $S_1=\{1,b,c,0\}$ and $T_1=\{d,a,c,0\}$. Both systems are indicated by filled circles, where $S_1$ is depicted on the left side and $T_1$ on the right side of Fig.~\ref{fig1}. The systems $S_2$ and $T_2$ are denoted by slashed circles, particularly $S_2=\{1,a\}$ and $T_2=\{b,0\}$. Let us note that for the pair $(S_2,T_2)$, the corresponding isomorphism $\varphi_2$ between these two systems is unique. However, considering the pair $(S_1,T_1)$, the corresponding isomorphism $\varphi_1\colon S_1\to T_1$ is indicated by the arrow, i.e., $\varphi_1(c)=a$. Then we necessarily have $\varphi_1(b)=c$, and the top and the bottom elements are mapped into their respective counterparts.
As the top elements of $T_1$ and $T_2$ satisfy $\top_1\vee\top_2=d\vee b=1 $, the considered two pairs of isomorphic closure-interior systems induce a $\bigvee$-preserving aggregation function $f$, the values of which are given in Table \ref{tab1}.
\begin{table}
\begin{center}
\begin{tabular}{|c|c c c c c c|}
\hline
$x \setminus y$ & $0$ & $a$ & $b$ & $c$ & $d$ & $1$ \\
\hline
$0$             & $0$ & $0$ & $b$ & $b$ & $b$ & $b$ \\
$a$             & $c$ & $c$ & $1$ & $1$ & $1$ & $1$ \\
$b$             & $c$ & $c$ & $1$ & $1$ & $1$ & $1$ \\
$c$             & $a$ & $a$ & $b$ & $b$ & $b$ & $b$ \\
$d$             & $d$ & $d$ & $1$ & $1$ & $1$ & $1$ \\
$1$             & $d$ & $d$ & $1$ & $1$ & $1$ & $1$ \\
\hline
\end{tabular}
\caption{The table corresponding to the values $f(x,y)$.}
\label{tab1}
\end{center}
\end{table}
For example, the value $f(c,d)$ is calculated in the following way. First, the closures $c_{S_1}(c)$ and $c_{S_2}(d)$ are determined. In the first case, $c_{S_1}(c)$ is the smallest element in $S_1$ which is above $c$ and, obviously, it is $c$ itself. In the second case, $c_{S_2}(d)$ is the smallest element in $S_2$ which is above $d$, which is $1$. Finally, $$f(c,d)=\varphi_1(c_{S_1}(c))\vee \varphi_2(c_{S_2}(d))=\varphi_1(c)\vee \varphi_2(1)=a\vee b=b.$$  

\end{example}

Further, we describe a decomposition in the case when an underlying complete lattice is a direct product of an indexed system of complete lattices.

\begin{theorem}\label{thm11}
Let $\{L_{\lambda}\mid \lambda\in \Lambda\}$ and $\{M_{\gamma}\mid \gamma\in \Gamma\}$ be indexed families of complete lattices. Then there is a $\bigvee$-preserving mapping $f\colon\prod_{\lambda\in \Lambda}L_{\lambda}\to \prod_{\gamma\in\Gamma}M_{\gamma}$ if and only if there is a system $\{f_{\lambda\gamma}\colon L_{\lambda}\to M_{\gamma}\mid \lambda\in\Lambda, \gamma\in\Gamma\}$ of $\bigvee$-preserving mappings such that 
\begin{equation}\label{eq1}
f\big(\mathbf{x}\big)(\gamma)=\bigvee_{\lambda\in\Lambda}f_{\lambda\gamma}(\mathbf{x}(\lambda)),\quad \mbox{for all }\mathbf{x}\in\prod_{\lambda\in\Lambda}L_{\lambda}. 
\end{equation}
\end{theorem}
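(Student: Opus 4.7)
The plan is to prove both directions via the standard product decomposition trick, using the fact that suprema in a product of complete lattices are computed coordinatewise.

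For the easier (``if'') direction, I would assume the system $\{f_{\lambda\gamma}\}$ is given and take $f$ to be defined by the formula \eqref{eq1}. To check that $f$ preserves arbitrary suprema, I would take a family $\{\mathbf{x}_i \mid i\in I\}\subseteq \prod_{\lambda}L_{\lambda}$, evaluate $f(\bigvee_i \mathbf{x}_i)(\gamma)$ coordinatewise, pull the supremum through $f_{\lambda\gamma}$ using its $\bigvee$-preservation, and finally swap the two suprema (this is allowed in any complete lattice). Then coordinatewise comparison gives $f(\bigvee_i \mathbf{x}_i)=\bigvee_i f(\mathbf{x}_i)$.

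For the ``only if'' direction, the main ingredients are the canonical projections $\pi_{\gamma}\colon \prod_{\gamma}M_{\gamma}\to M_{\gamma}$ and the ``coordinate embeddings'' $\iota_{\lambda}\colon L_{\lambda}\to \prod_{\lambda}L_{\lambda}$ sending $y\in L_{\lambda}$ to the tuple with entry $y$ in coordinate $\lambda$ and $0$ elsewhere. I would define
\[
 f_{\lambda\gamma}(y) \;=\; \pi_{\gamma}\bigl(f(\iota_{\lambda}(y))\bigr).
\]
Each $\pi_{\gamma}$ is trivially $\bigvee$-preserving, and $\iota_{\lambda}$ is $\bigvee$-preserving because in coordinate $\lambda$ it simply carries the supremum, while in the other coordinates the supremum of zeros is zero; hence each $f_{\lambda\gamma}$, being a composition of three $\bigvee$-preserving maps, is $\bigvee$-preserving as required.

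The key identity then to verify is that every $\mathbf{x}\in\prod_{\lambda}L_{\lambda}$ decomposes as $\mathbf{x}=\bigvee_{\lambda\in\Lambda}\iota_{\lambda}(\mathbf{x}(\lambda))$; this is immediate by checking each coordinate (only the $\lambda$-th embedding contributes non-trivially in coordinate $\lambda$). Applying $f$, then $\pi_{\gamma}$, and using that both maps preserve suprema yields
\[
 f(\mathbf{x})(\gamma)=\pi_{\gamma}\Bigl(\bigvee_{\lambda}f(\iota_{\lambda}(\mathbf{x}(\lambda)))\Bigr)=\bigvee_{\lambda}f_{\lambda\gamma}(\mathbf{x}(\lambda)),
\]
which is the desired formula \eqref{eq1}. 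The only step that needs a bit of care is the coordinatewise decomposition $\mathbf{x}=\bigvee_{\lambda}\iota_{\lambda}(\mathbf{x}(\lambda))$ together with the (routine) supremum swap; once these are in place the rest is bookkeeping and no single step is a genuine obstacle.
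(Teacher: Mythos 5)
Your proposal is correct and follows essentially the same route as the paper: the ``if'' direction by pulling suprema through the $f_{\lambda\gamma}$ coordinatewise and swapping, and the ``only if'' direction by setting $f_{\lambda\gamma}(y)=\pi_{\gamma}\bigl(f(\iota_{\lambda}(y))\bigr)$ (the paper's $\mathbf{0}_{\lambda,x}$ is exactly your $\iota_{\lambda}(x)$) and using the decomposition $\mathbf{x}=\bigvee_{\lambda}\iota_{\lambda}(\mathbf{x}(\lambda))$. Your explicit justification that each $f_{\lambda\gamma}$ is a composition of three $\bigvee$-preserving maps is slightly more detailed than the paper's, but the argument is the same.
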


\begin{proof}
First, assume that there is the above system $\{f_{\lambda\gamma}\mid \lambda\in\Lambda, \gamma\in\Gamma\}$ of $\bigvee$-preserving mappings. Let $\{\mathbf{x}_i\mid i\in I\}\subseteq \prod_{\lambda\in \Lambda}L_{\lambda} $ be a family of elements and let $f$ be defined by \eqref{eq1}.
Then for each $\gamma\in \Gamma$ 
$$ f\big(\bigvee_{i\in I}\mathbf{x}_i\big)(\gamma) = \bigvee_{\lambda\in\Lambda}f_{\lambda\gamma}\big(\bigvee_{i\in I}\mathbf{x}_i(\gamma)\big)=\bigvee_{\lambda\in\Lambda}\bigvee_{i\in I}f_{\lambda\gamma}\big(\mathbf{x}_i(\gamma)\big),$$ 
where the last equality follows from the fact that for each $\lambda\in\Lambda$ and $\gamma\in \Gamma$ the mapping $f_{\lambda\gamma}$ is $\bigvee$-preserving. However, due to basic properties of the supremum operation, we obtain 
$$ \bigvee_{\lambda\in\Lambda}\bigvee_{i\in I}f_{\lambda\gamma}\big(\mathbf{x}_i(\gamma)\big)=\bigvee_{i\in I}\bigvee_{\lambda\in\Lambda}f_{\lambda\gamma}\big(\mathbf{x}_i(\gamma)\big)=\bigvee_{i\in I} f\big(\mathbf{x}_i\big)(\gamma) = \big(\bigvee_{i\in I}f(\mathbf{x}_i)\big)(\gamma).$$ 

Conversely, assume that $f\colon\prod_{\lambda\in \Lambda}L_{\lambda}\to \prod_{\gamma\in\Gamma}M_{\gamma}$ is a $\bigvee$-preserving mapping. 
For $\lambda\in \Lambda$ and $\gamma\in\Gamma$ define $f_{\lambda\gamma}\colon L_{\lambda}\to M_{\gamma}$ as $f_{\lambda\gamma}(x)=f(\mathbf{0}_{\lambda,x})(\gamma)$, where $\mathbf{0}_{\lambda,x}\in \prod_{\lambda\in\Lambda}L_{\lambda}$ is given by $\mathbf{0}_{\lambda,x}(\xi)=x$ if $\xi=\lambda$ and $\mathbf{0}_{\lambda,x}(\xi)=0$ otherwise. It is easily seen that every $f_{\lambda\gamma}$ is $\bigvee$-preserving.
Further,  for all $\gamma\in\Gamma$ we obtain
$$ f\big(\mathbf{x}\big)(\gamma)=f\big(\bigvee_{\lambda\in\Lambda}\mathbf{0}_{\lambda,\mathbf{x}(\lambda)}\big)(\gamma)=\bigvee_{\lambda\in\Lambda} f\big(\mathbf{0}_{\lambda,\mathbf{x}(\lambda)}\big)(\gamma)=\bigvee_{\lambda\in\Lambda}f_{\lambda\gamma}(\mathbf{x}(\lambda)).$$

\end{proof}

\begin{corollary}
Let $f\colon\prod_{\lambda\in \Lambda}L_{\lambda}\to \prod_{\gamma\in\Gamma}M_{\gamma}$ be a $\bigvee$-preserving mapping defined by \eqref{eq1}. Then its $\bigwedge$-preserving adjoint $g\colon\prod_{\gamma\in\Gamma}M_{\gamma}\to \prod_{\lambda\in\Lambda}L_{\lambda}$ is given by 
\begin{equation}\label{eq2}
g\big(\mathbf{x}\big)(\lambda)=\bigwedge_{\gamma\in\Gamma}g_{\lambda\gamma}(\mathbf{y}(\gamma)),\quad \mbox{for all }\mathbf{y}\in\prod_{\gamma\in\Gamma}M_{\lambda}, 
\end{equation}
where for each $\lambda\in\Lambda$ and $\gamma\in\Gamma$, $g_{\lambda\gamma}$ is the upper adjoint of $f_{\lambda\gamma}$.  
\end{corollary}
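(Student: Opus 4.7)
The plan is to verify directly that the formula \eqref{eq2} defines the (unique) upper adjoint of $f$ by checking the defining adjunction \eqref{eq_res}. By Proposition \ref{prop1}, $f$ has a unique upper adjoint $g$, and by the uniqueness noted just after \eqref{eq_res}, it suffices to show that the mapping $g$ defined by \eqref{eq2} satisfies $f(\mathbf{x})\leq \mathbf{y}$ iff $\mathbf{x}\leq g(\mathbf{y})$ for all $\mathbf{x}\in \prod_{\lambda\in\Lambda}L_\lambda$ and $\mathbf{y}\in \prod_{\gamma\in\Gamma}M_\gamma$. The existence of each $g_{\lambda\gamma}$ as the upper adjoint of $f_{\lambda\gamma}$ is ensured by Proposition \ref{prop1} applied to the $\bigvee$-preserving component $f_{\lambda\gamma}$.

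The key step is a short chain of equivalences exploiting the componentwise order on direct products together with the adjunctions for the components. Specifically, I would argue:
\begin{align*}
f(\mathbf{x})\leq \mathbf{y}
&\iff \forall\gamma\in\Gamma:\ \bigvee_{\lambda\in\Lambda} f_{\lambda\gamma}(\mathbf{x}(\lambda))\leq \mathbf{y}(\gamma) \\
&\iff \forall\lambda\in\Lambda,\ \forall\gamma\in\Gamma:\ f_{\lambda\gamma}(\mathbf{x}(\lambda))\leq \mathbf{y}(\gamma) \\
&\iff \forall\lambda\in\Lambda,\ \forall\gamma\in\Gamma:\ \mathbf{x}(\lambda)\leq g_{\lambda\gamma}(\mathbf{y}(\gamma)) \\
&\iff \forall\lambda\in\Lambda:\ \mathbf{x}(\lambda)\leq \bigwedge_{\gamma\in\Gamma} g_{\lambda\gamma}(\mathbf{y}(\gamma)) \\
&\iff \mathbf{x}\leq g(\mathbf{y}).
\end{align*}
The first and last equivalences use that the order on the direct products is componentwise; the second and fourth use the universal properties of suprema and infima; and the crucial middle equivalence is the adjunction $(f_{\lambda\gamma}, g_{\lambda\gamma})$ applied component by component.

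Since there are no non-trivial obstacles (the argument is essentially a tautological chain once the componentwise structure is unpacked), the only thing to be careful about is keeping the two index sets $\Lambda$ and $\Gamma$ straight and making sure that the quantifier swap in the middle of the chain — which is what converts a supremum-domination into an infimum-domination — is clearly justified. One could also remark that the $\bigwedge$-preservation of $g$ is then automatic from Proposition \ref{prop2}, so no separate verification of that property is required.
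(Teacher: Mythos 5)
Your chain of equivalences is exactly the argument the paper gives: both verify the adjunction $f(\mathbf{x})\leq\mathbf{y}$ iff $\mathbf{x}\leq g(\mathbf{y})$ componentwise, swapping the supremum over $\Lambda$ for universal quantification, applying the component adjunctions $(f_{\lambda\gamma},g_{\lambda\gamma})$, and recollecting via the infimum over $\Gamma$. The proposal is correct and matches the paper's proof.
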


\begin{proof}
Let $f$, $g$ be the mappings defined by \eqref{eq1} and \eqref{eq2}, respectively.
Then for any $\mathbf{x}\in\prod_{\lambda\in\Lambda}L_{\lambda}$ and $\mathbf{y}\in\prod_{\gamma\in\Gamma}M_{\gamma}$ we obtain 
$$ f(\mathbf{x})\leq \mathbf{y} \quad \mbox{iff} \quad f\big(\mathbf{x}\big)(\gamma)=\bigvee_{\lambda\in\Lambda}f_{\lambda\gamma}(\mathbf{x}(\lambda))\leq \mathbf{y}(\gamma),\; \forall \gamma\in\Gamma,$$
which is equivalent to 
$f_{\lambda\gamma}(\mathbf{x}(\lambda))\leq \mathbf{y}(\gamma)$, for all $\lambda\in\Lambda$ and for all $\gamma\in\Gamma$. However, by \eqref{eq_res} this is equivalent to $\mathbf{x}(\lambda)\leq g_{\lambda\gamma}(\mathbf{y}(\gamma))$ for all $\lambda\in\Lambda$ and $\gamma\in \Gamma$, which holds if and only if 
$$ \mathbf{x}(\lambda)\leq \bigwedge_{\gamma\in\Gamma}g_{\lambda\gamma}(\mathbf{y}(\gamma))=g\big(\mathbf{y}\big)(\lambda),\; \forall \lambda\in \Lambda\quad \mbox{iff}\quad \mathbf{x}\leq g(\mathbf{y}).$$ This shows that $f$ and $g$ satisfy \eqref{eq_res}.
\end{proof}

In particular cases, the last two results can be significantly strengthened. For
example, consider an associative symmetric aggregation function
$f\colon ([0,1]^{n})^{2}\to [0,1]^{n}$ with a neutral element $e=(1,\dots,1)$, i.e., $f$ is a
triangular norm \cite{KMP} of the product lattice $[0,1]^n$. Then, due to \cite{BM},
$f$ is $\bigvee$-preserving ($\bigwedge$-preserving) if and only if $f$ is a product of classical
$\bigvee$-preserving ($\bigwedge$-preserving) t-norms, i.e., there are triangular norms
$f_i\colon [0,1]^{2}\to [0,1]$, $i=1,\dots,n$, which are $\bigvee$-preserving ($\bigwedge$-preserving), and
$f(\mathbf{x},\mathbf{y})=(f_1(x_1,y_1),\dots,f_n(x_n,y_n))$.

\begin{example}
Consider the lattice $L$ from Example \ref{ex1}. It can be easily seen that $L\cong \mathbf{3}\times \mathbf{2}$, where $\mathbf{3}=\{0,1,2\}$ and $\mathbf{2}=\{0,1\}$ denote the three-element and two-element chains respectively, with the usual order. We present the decomposition of the $\bigvee$-preserving mapping $f_1$ determined by the closure-interior system $(S_1,T_1)$ from Example \ref{ex1}. According to Theorem \ref{thm11}, there is a system of $\bigvee$-preserving mappings between the particular direct factors. The resulting decomposition via closure-interior systems is given in Fig. \ref{fig2}.

\begin{figure}
\begin{center}
\includegraphics[scale=1]{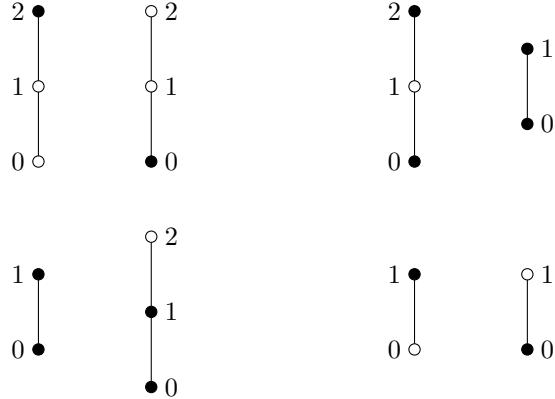}
\caption{Decomposition of the mapping corresponding to $(S_1,T_1)$ pair of Example \ref{ex1}.}
\label{fig2}
\end{center}
\end{figure}

\end{example}

In the finite case, every lattice is a direct product of directly indecomposable lattices. However, the structure of directly indecomposable lattices may be difficult, e.g., it is evident that every lattice with a prime number of elements is directly indecomposable. 
Hence, in order to obtain a representation of $\bigvee$-preserving aggregation functions with respect to possibly simpler lattices, we turn our attention to the so-called subdirect products, cf. \cite{Gratzer}.

Recall that a lattice $L$ is a subdirect product of an indexed family $\{L_i\mid i\in I\}$ of lattices if
\begin{enumerate}
\item $L$ is a sublattice of $\prod_{i\in I}L_i$, 
\item $\pi_i(L)=L_i$ for each $i\in I$, i.e., each coordinate projection $\pi_i$ maps $L$ onto the corresponding factor.
\end{enumerate}
In such a case we shall write $L\leq \prod_{i\in I}L_i$.  
A subdirect representation of a lattice is any embedding whose image is a subdirect product.
A lattice $L$ is said to be subdirectly irreducible if $\left|L\right|>1$ and all subdirect representations of $L$ are trivial, i.e., if $L\cong L_1\leq \prod_{i\in I}L_i$ then necessarily $L\cong L_i$ for some index $i\in I$. In other words, a lattice is subdirectly irreducible if it is not subdirectly representable by ``simpler" lattices. Note that subdirectly irreducible lattices play a similar role with respect to subdirect products of lattices as primes with respect to multiplication of integers. 

Recall the well-known fact that every lattice $L$ is isomorphic to a subdirect product of subdirectly irreducible lattices, which are homomorphic images of $L$. From this point of view, the subdirectly irreducible lattices can be considered to have a simpler structure than the former lattice. 

As subdirectly irreducible factors of a subdirect representation can be found within the homomorphic images, it follows that every finite lattice is isomorphic to a subdirect product of finite numbers of finite subdirectly irreducible lattices.

For the sake of simplicity, when dealing with a subdirect representation, we limit our attention to the finite case.

First, observe the following two simple but important facts. If $f_1\colon L\to M$, $g_1\colon M\to L$ and $f_2\colon M\to K$, $g_2\colon K\to M$ are monotone Galois connections, then the composition $f_1\circ f_2\colon L\to K$, $g_2\circ g_1\colon K\to L$ is a monotone Galois connection as well. Indeed, for all $x\in L$ and $y\in K$ we have
$$ x\leq g_1(g_2(y))\quad\mbox{iff}\quad f_1(x)\leq g_2(y)\quad\mbox{iff}\quad f_2(f_1(x))\leq y.$$

Further, given a finite lattice $L$ and a sublattice $M\subseteq L$, let $c_M$ and $i_M$ be the corresponding closure and interior operators, respectively. Note that $M$ is both closure and interior system on $L$, since $M$ is a sublattice. Then the pair $(c_M,\mathrm{id}_M)$, considering $c_M$ as a mapping $c_M\colon L\to M$ and $\mathrm{id}_M\colon M\to L$ being the identity inclusion of $M$ into $L$, forms a monotone Galois connection between $L$ and its sublattice $M$. This follows from the defining properties of the closure operators, in fact for any $x\in L$ and  $y\in M$ we obtain
$$ x\leq \mathrm{id}_M(y)=y \quad\mbox{iff}\quad c_M(x) \leq c_M(y)=y. $$
Similarly, the pair $(\mathrm{id}_M,i_M)$, $i_M\colon L\to M$, forms a monotone Galois connection between $M$ and $L$, since for all $x\in M$ and $y\in L$
$$ x\leq i_M(y) \quad\mbox{iff}\quad \mathrm{id}_M(x)=x\leq i_M(y)\leq y.$$

\begin{lemma}
Let $L$ be a finite lattice and $M\subseteq L$ be a sublattice. Then any $\bigvee$-preserving function $f\colon M\to M$ is given by
\begin{equation}\label{eq31}
f(x)=c_M(F(x)), \ \mbox{for all}\ x\in M,
\end{equation}
where $F\colon L\to L$ is a $\bigvee$-preserving mapping. 

In this case, the upper adjoint $g\colon M\to M$ of $f$ is given by 
\begin{equation}\label{eq32}
g(y)=i_M(G(y)), \ \mbox{for all}\ y\in M,
\end{equation}
where $G\colon L\to L$ is the upper adjoint of $F$.
\end{lemma}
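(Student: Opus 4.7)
The strategy is to realise $f$ as the restriction of a $\bigvee$-preserving endomorphism of $L$ obtained by sandwiching $f$ between the two Galois connections linking $L$ with its sublattice $M$. Finiteness of $L$ enters precisely at one spot: because every subset of $M$ is finite, the sublattice $M$ is closed under all suprema and infima taken in $L$, so $M$ is simultaneously a closure system and an interior system on $L$. Consequently, as recalled in the two paragraphs preceding the lemma, we have two Galois connections at our disposal,
\[ (c_M,\mathrm{id}_M)\colon L\rightleftarrows M \qquad\text{and}\qquad (\mathrm{id}_M,i_M)\colon M\rightleftarrows L, \]
where $\mathrm{id}_M\colon M\hookrightarrow L$ is the inclusion. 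Moreover, since $f\colon M\to M$ is $\bigvee$-preserving and $M$ is complete, Proposition~\ref{prop1} furnishes an upper adjoint $g\colon M\to M$.

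The plan is then to concatenate the three Galois connections
\[ L \;\xrightarrow{(c_M,\,\mathrm{id}_M)}\; M \;\xrightarrow{(f,\,g)}\; M \;\xrightarrow{(\mathrm{id}_M,\,i_M)}\; L \]
and appeal to the composition rule for monotone Galois connections verified in the paragraph immediately preceding the lemma. This at once produces a Galois connection $(F,G)$ on $L$, with the lower adjoint given by the composite of the three lower adjoints in the natural order and the upper adjoint by the composite of the upper adjoints in the reverse order, namely
\[ F \;=\; \mathrm{id}_M\circ f\circ c_M, \qquad G \;=\; \mathrm{id}_M\circ g\circ i_M. \]
By Proposition~\ref{prop1}, $F$ is $\bigvee$-preserving on $L$ and $G$ is its upper adjoint.

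Finally, it remains to verify that restricting $F$ and $G$ back to $M$ via $c_M$ and $i_M$ reproduces $f$ and $g$. For any $x\in M$ one has $c_M(x)=x$, so $F(x)=f(x)$ viewed as an element of $L$; since $f(x)\in M$ is a fixed point of $c_M$, we conclude $c_M(F(x))=f(x)$, which is \eqref{eq31}. An entirely symmetric calculation using $i_M(y)=y$ for $y\in M$ yields $i_M(G(y))=g(y)$, which is \eqref{eq32}. The only point requiring mild care is bookkeeping the direction of composition of the three Galois connections; once this is arranged, the lemma is essentially a formal consequence of the composition rule for Galois connections together with the double nature (closure/interior) of the sublattice $M$ in the finite setting, and no genuine obstacle arises.
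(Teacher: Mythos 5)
Your proof is correct and takes essentially the same route as the paper: you extend $f$ to $F = f\circ c_M$ on $L$ by composing the Galois connections $(c_M,\mathrm{id}_M)$, $(f,g)$, $(\mathrm{id}_M,i_M)$, and then use $c_M(x)=x$ and $f(x)\in M$ to check that $c_M\circ F$ and $i_M\circ G$ restrict back to $f$ and $g$, which is exactly the paper's argument. The only organizational difference is that the paper first verifies separately that any pair of the form $(c_M\circ F|_M,\, i_M\circ G|_M)$ is a monotone Galois connection and then exhibits the extension $F=f\circ c_M$, whereas you produce $F$ and its adjoint $G$ in a single pass via the composition rule.
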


\begin{proof}
Obviously, if $f$ and $g$ are given by \eqref{eq31} and \eqref{eq32} respectively, then 
$$f=\mathrm{id}_M\circ F\circ c_M \quad\mbox{and}\quad g= \mathrm{id}_M\circ G \circ i_M. $$
Since the pairs of mappings $(\mathrm{id}_M,i_M)$, $(F,G)$ and $(c_M,\mathrm{id}_M)$ form monotone Galois connections, their composition pair $(f,g)$ forms a monotone Galois connection as well.

Further, assume that $f\colon M\to M$ is a $\bigvee$-preserving mapping. Then we can extend $f$ to a $\bigvee$-preserving mapping $F$ with the domain $L$ as follows:
$$F(x)=f(c_M(x)), \ \mbox{for all}\ x\in L.$$
The mapping $F\colon L\to M\subseteq L$ is a composition of $\bigvee$-preserving mappings since $c_M$ is the lower adjoint of $\mathrm{id}_M$, and thus it is $\bigvee$-preserving. Moreover, for all $x\in M$ we have $c_M(x)=x$ and $f(x)\in M$, which yields 
$$ c_M(F(x))=c_M(f(c_M(x)))=c_M(f(x))=f(x).$$ 
Finally, this shows that $f$ can be expressed by \eqref{eq31}. 
\end{proof}

\begin{theorem}
Let $L$ be a finite lattice, $\{L_i\mid i\in I\}$ be a finite family of finite lattices such that $L\leq \prod_{i\in I}L_i$ and let $c_L$, $i_L$ be the corresponding closure and interior operators on $\prod_{i\in I}L_i$. Then the following conditions are equivalent:
\begin{enumerate}
\item $f\colon L^n\to L$ is a $\bigvee$-preserving aggregation function. 
\item For each $l\in\{1,\dots,n\}$ there is a system $\{f_{ij}^l\colon L_i\to L_j\mid i,j\in I\}$ of $\bigvee$-preserving mappings satisfying $\bigvee_{l=1}^n\bigvee_{i\in I}f_{ij}^l(1)=1$. Moreover, $f(\mathbf{x})=\bigvee_{l=1}^n c_L(F_l(\mathbf{x}(i)))$ for all $\mathbf{x}\in L^n$, and for each $l\in\{1,\dots,n\}$, $F_l\colon \prod_{i\in I}L_i\to \prod_{i\in I}L_i$ are given by \eqref{eq1}.
\end{enumerate}
\end{theorem}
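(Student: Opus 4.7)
The plan is to combine three earlier results: the basic decomposition \eqref{eq2a} of a $\bigvee$-preserving aggregation function on $L^n$ into $n$ unary $\bigvee$-preserving components, the preceding lemma, which expresses any $\bigvee$-preserving self-map of a sublattice via the ambient closure operator together with a $\bigvee$-preserving extension to the whole lattice, and Theorem \ref{thm11}, which decomposes $\bigvee$-preserving maps between direct products into a family of block mappings indexed by pairs of factors.

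For the implication $(1)\Rightarrow(2)$, I first invoke \eqref{eq2a} to write
$$f(\mathbf{x})=\bigvee_{l=1}^{n} f_l(\mathbf{x}(l)),$$
where each $f_l\colon L\to L$ is $\bigvee$-preserving. Viewing $L$ as a sublattice of $\prod_{i\in I}L_i$, the preceding lemma provides, for every $l$, a $\bigvee$-preserving extension $F_l\colon\prod_{i\in I}L_i\to\prod_{i\in I}L_i$ such that $f_l(x)=c_L(F_l(x))$ whenever $x\in L$. Applying Theorem \ref{thm11} to each $F_l$ (with both index sets equal to $I$) produces the required families $\{f_{ij}^{l}\colon L_i\to L_j\}$ obeying formula \eqref{eq1}. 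The boundary condition $f(1,\dots,1)=1$ translates into the required normalization as follows: because the embedding $L\hookrightarrow\prod_{i\in I}L_i$ is surjective on every coordinate, the top of $L$ coincides with the top of the product, and because $c_L$ is a lower adjoint and hence itself $\bigvee$-preserving,
$$1 = f(1,\dots,1) = \bigvee_{l=1}^{n} c_L\bigl(F_l(1,\dots,1)\bigr) = c_L\Bigl(\bigvee_{l=1}^{n} F_l(1,\dots,1)\Bigr).$$
Combining this with the coordinate-wise formula $F_l(1,\dots,1)(j) = \bigvee_{i\in I}f_{ij}^{l}(1)$ forces $\bigvee_{l=1}^{n}\bigvee_{i\in I}f_{ij}^{l}(1) = 1$ in every factor $L_j$.

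For the reverse implication $(2)\Rightarrow(1)$, I simply reverse the construction: given the systems $\{f_{ij}^{l}\}$, I define $F_l$ by \eqref{eq1}, set $f_l := c_L\circ F_l$ restricted to $L$, and put $f(\mathbf{x}):=\bigvee_{l=1}^{n} f_l(\mathbf{x}(l))$. The function $f$ is $\bigvee$-preserving since it is a join of compositions of $\bigvee$-preserving maps. The normalization $f(0,\dots,0)=0$ is automatic because $\bigvee$-preserving maps send $0$ to $0$ and $c_L(0)=0$, while $f(1,\dots,1)=1$ follows from the hypothesis $\bigvee_{l,i}f_{ij}^{l}(1) = 1$ via the same coordinate-wise argument as above.

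The main obstacle I foresee is essentially bookkeeping: one must keep the arity index $l\in\{1,\dots,n\}$ and the subdirect-factor indices $i,j\in I$ cleanly separated, and one must check that the composition $c_L\circ F_l$ really lands in $L$ so that the outer supremum in the definition of $f$ lives inside $L$ rather than merely inside the ambient product. The only conceptual point that deserves attention is the translation of the boundary condition, which relies on $L$ being a subdirect (and not merely a sub-) product so that the top of $L$ equals the top of $\prod_{i\in I}L_i$ and $c_L$ fixes it.
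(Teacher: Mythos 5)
Your overall strategy is exactly the intended one: the paper states this theorem without an explicit proof, and it is meant to follow by chaining the decomposition \eqref{eq2a}, the preceding lemma on sublattices of a finite lattice, and Theorem \ref{thm11} --- which is precisely the assembly you carry out. The direction $(2)\Rightarrow(1)$ is correct as written, and your remarks about the top (and bottom) of $L$ coinciding with those of $\prod_{i\in I}L_i$ by subdirectness are the right observations.

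There is, however, one step in $(1)\Rightarrow(2)$ that is invalid as stated: you derive the normalization $\bigvee_{l}\bigvee_{i}f_{ij}^{l}(1)=1$ from the identity $1=c_L\bigl(\bigvee_{l}F_l(1,\dots,1)\bigr)$. But $c_L(z)$ is only the \emph{least element of $L$ above} $z$, so $c_L(z)=1$ does not force $z=1$. Concretely, take $L=\{(0,0),(1,1)\}$ as a subdirect product of $\mathbf{2}\times\mathbf{2}$, $n=1$ and $f=\mathrm{id}_L$; the $\bigvee$-preserving map $F_1$ with $F_1(1,0)=(1,0)$ and $F_1(0,1)=(0,0)$ satisfies $c_L\circ F_1|_L=f$, yet $f^1_{12}(1)\vee f^1_{22}(1)=0$. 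So for an \emph{arbitrary} extension $F_l$ the normalization can genuinely fail, and your ``forces'' is doing no work. The gap is closed by using the specific extension produced by the lemma, namely $F_l=f_l\circ c_L$: its range lies in $L$, hence $F_l(1,\dots,1)=f_l(1,\dots,1)$ and $\bigvee_{l}F_l(1,\dots,1)=f(1,\dots,1)=1$ holds coordinatewise in $\prod_{i\in I}L_i$ with no appeal to $c_L$ at all. With that single correction your argument is complete.
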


\begin{example}

Consider the lattice $L$ from Example \ref{ex1}. It is the well-known fact that the only subdirectly irreducible distributive lattice is isomorphic to the two element chain $\mathbf{2}$. As $L$ is distributive, it has a subdirect representation into the direct power of two element chains. Fig. \ref{fig3} shows such a subdirect representation of $L$ in $\mathbf{2}^3$.  

\begin{figure}
\begin{center}
\includegraphics[scale=0.75]{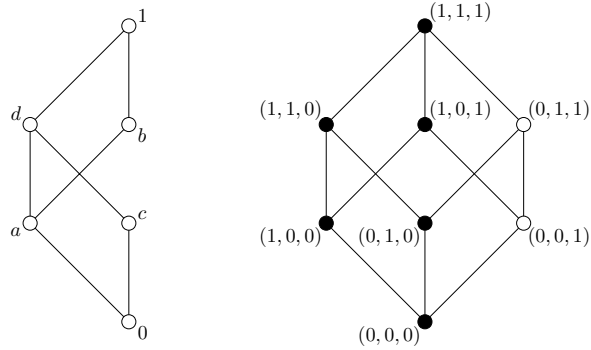}
\caption{A subdirect representation of the lattice $L$ in $\mathbf{2}^3$.}
\label{fig3}
\end{center}
\end{figure}

With respect to this representation, any $\bigvee$-preserving mapping is determined by a system of closure-interior systems on $\mathbf{2}$. We present such a system, corresponding to the $\bigvee$-preserving mapping determined by the pair $(S_1,T_1)$ from Example \ref{ex1}. Particularly, as $L$ is subdirectly represented in $\mathbf{2}^3$, we have $3\cdot 3=9$ pairs of closure-interior systems. They are depicted in Fig. \ref{fig4}. The system in the $i$-th row and in the $j$-th column corresponds to the mapping between the $i$-th factor and the $j$-th factor within the subdirect representation of $L$. 

\begin{figure}
\begin{center}
\includegraphics[scale=1]{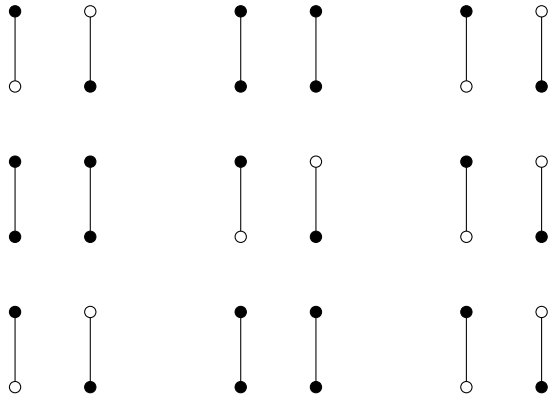}
\caption{``Subdirect decomposition" of the mapping corresponding to $(S_1,T_1)$ pair of Example \ref{ex1}.}
\label{fig4}
\end{center}
\end{figure}

\end{example}

Let us note, that in a similar way the subdirect decomposition of the mapping corresponding to $(S_2,T_2)$ pair can be obtained. Hence the whole aggregation function $f$ from Example \ref{ex1} can be characterized as a $6\times 3$ matrix consisting of isomorphic closure-interior pairs on $\mathbf{2}$.

\section{Relationship to the FCA based biclustering methods} 

In this section we briefly discuss possible applications of the results from the previous section within certain biclustering fuzzy FCA-based methods.

As it is common in cluster analysis, an $n\times m$ data matrix $R$ is given by objects $B$, attributes $A$ and entries $R(b,a)$. The primary aim of bicluster analysis is to identify subgroups of objects $X\subseteq B$ which are as similar as possible to each other with respect to some subset of attributes $Y\subseteq A$, and different as much as possible to the rest of objects and attributes. Bicluster is then formally defined as a pair $(X,Y)$. 

Such a relatively wide definition of biclustering certainly fulfills the classical FCA, cf. \cite{Ifca} where some link between FCA and biclustering can be found. In this case a data matrix $R$ contains only $0$ - $1$ values and it can be formally seen as an incidence relation between the objects and attributes, i.e., $R\subseteq B\times A$. Then $(b,a)\in R$ or equivalently $R(b,a)=1$ is interpreted as ``an object $b$ has an attribute $a$".  
The biclusters, so-called formal concepts, correspond to the maximal rectangles in the data sets. Given a formal concept $(X,Y)$, $Y$ is a subset of all attributes shared by all objects of $X$, while $X$ is a subset of objects sharing all attributes of $Y$. This ``sharing" of attributes represents the essence of the similarity, mentioned in the definition of biclustering. 
It turns out that formal concepts can be defined via concept forming operators, acting between the power set $\mathbf{P}(B)$ of objects and the power set $\mathbf{P}(A)$ of attributes. These operators are induced by the relation $R$ 
$$ X^{\prime}=\{a\in A\mid (b,a)\in R \ \mbox{for all}\ b\in X\}$$
$$ Y^{\backprime}=\{b\in B\mid (b,a)\in R\ \mbox{for all}\ a\in Y\}.$$
The formal concepts are precisely the fixed points of the operators, i.e., pairs $(X,Y)$ fulfilling $X^{\prime}=Y$ and $X=Y^{\backprime}$. Such defined concept forming operators form an antitone Galois connection between $\mathbf{P}(B)$ and $\mathbf{P}(A)$, cf. \cite{GW}.

Having this in mind, several biclustering-like fuzzy approaches to FCA were proposed, either with the help of antitone or monotone Galois connections \cite{B01,Kr1,MedN3,MedN2,MedN1}.

To mention an example of such fuzzy concept forming operators, consider the so-called monotone $\mathbf{L}$-Galois connections, introduced in \cite{GePo}. In this case, $\mathbf{L}=(L,\wedge,\vee,\otimes,\rightarrow,0,1)$ is a complete commutative residuated lattice and $R\colon B\times A\to L$ is an $\mathbf{L}$-relation. The concept forming operators $\uparrow\colon L^B\to L^A$ and $\downarrow \colon L^A\to L^B$ are given by
$${\uparrow}(\mathbf{x})(a)=\bigvee_{b\in B}\mathbf{x}(b)\otimes R(b,a), $$
$${\downarrow}(\mathbf{y})(b)=\bigwedge_{a\in A} R(b,a)\rightarrow \mathbf{y}(a).$$

These concept forming operators are typically involved when some type of fuzzy logic, with $\mathbf{L}$ as the truth value structure, is used for evaluation membership degrees of particular attributes. The operation $\otimes$ is a fuzzy counterpart of the classical logical conjunction, while $\rightarrow$ stands for a fuzzy implication. From a fuzzy logical point of view, the concept forming operators admit the following interpretation: ${\uparrow}(\mathbf{x})(a)$ represents the truth degree of the proposition ``there exists an object $b\in X$ having an attribute $a$", where the fuzzy subset $\mathbf{x}$ corresponds to $X$, and ${\downarrow}(\mathbf{y})(b)$ represents the truth degree of the proposition ``for all attributes $a\in A$, $a\in Y$ provided an object $b$ is in relation $R$ with $a$", $\mathbf{y}$ taking the role of $Y$.

In each residuated lattice, the two fuzzy connectives $\otimes$ and $\rightarrow$ are related by adjoint property
$$ x\otimes a\leq y \quad\mbox{iff}\quad x\leq a\rightarrow y.$$
Hence each $a\in L$ determines a $\bigvee$-preserving mapping $f_a\colon L\to L$, $f_a(x)=x\otimes a$ for all $x\in L$, with the upper adjoint $g_a(y)=a\rightarrow y$ for all $y\in L$.
Observe that concept forming operators $\uparrow$ and $\downarrow$ are defined by \eqref{eq1} and \eqref{eq2}, respectively.

The previous considerations allow to define a slightly modified, but more general approach, not connected with any fuzzy logic framework. Let $L$ be a fixed complete lattice, $B$ be a set of objects and $A$ be a set of attributes characterizing particular objects. As a basic input, consider a data table in the form of a many-valued binary relation $R\colon B\times A\to V$, where $V$ represents some set of possible alternatives for a characterization of objects from $B$ by particular attributes from $A$. To induce concept forming operators from such input data, we use a mapping from the set $V$ into the set of $\bigvee$-preserving mappings, where $f_a\colon L\to L$ denotes a mapping associated to an element $a\in V$. Then applying Theorem \ref{thm11} and its corollary we obtain the mappings $F\colon L^B\to L^A$ and $G\colon L^A\to L^B$ defined by
\begin{equation}\label{eq41}
F(\mathbf{x})(a)=\bigvee_{b\in X}f_{R(b,a)}(\mathbf{x}(b)),
\end{equation}
for all $\mathbf{x}\in L^B$, and
\begin{equation}\label{eq42}
G(\mathbf{y})(b)=\bigwedge_{a\in A}g_{R(b,a)}(\mathbf{y}(a)),
\end{equation}
for all $\mathbf{y}\in L^A$, form a monotone Galois connection between them. The biclusters, or fuzzy formal concepts, are defined as the fixed points of these operators, i.e. $(\mathbf{x},\mathbf{y})$ is a fuzzy formal concept if $F(\mathbf{x})=\mathbf{y}$ and $\mathbf{x}=G(\mathbf{y})$.

From Proposition \ref{prop3} we obtain that the set of all concepts, partially ordered by $(\mathbf{x}_1,\mathbf{y}_1)\leq (\mathbf{x}_2,\mathbf{y}_2)$ if $\mathbf{x}_1\leq\mathbf{x}_2$ (or equivalently, $\mathbf{y}_1\leq\mathbf{y}_2$), has a lattice structure. Particularly, this concept lattice is isomorphic to the induced interior system $\mathrm{Rng}(F)$ on $L^A$, which is also isomorphic to the induced closure system $\mathrm{Rng}(G)$ on $L^B$.

Taking into account a natural condition that the top elements of $L^B$ and $L^A$ should form a formal concept, we obtain that for each $a\in A$ the composition $F\circ \pi_a\colon L^B\to L$ of $F$ and the projection map $\pi_a$ forms a $\bigvee$-preserving aggregation function.
Hence the basic concept forming operator can also be seen as a system of $\left|A\right|$ $\bigvee$-preserving mappings, where any of them is in some sense determined by the values $R(b,a)$, $b$ varying through the set of all objects $B$. Given a fixed attribute $a\in A$, the mapping $F\circ \pi_a$ assigning to each $\mathbf{x}\in L^B$ the value $F(\mathbf{x})(a)=\bigvee_{b\in B} f_{R(b,a)}(\mathbf{x}(b))$ can be seen as some kind of weighted supremum of the values $\{\mathbf{x}(b)\mid b\in B\}$. Consequently, formal concepts can be studied and interpreted within many other theories, where these types of functions play an important role, e.g., in multicriterial decision support. 
Also the mapping $F\circ \pi_a$ can be understood as some $L$-valued possibility measure on $L$-fuzzy sets, being a fuzzy analogy of possibility measures. Such view allows to consider about FCA-based clustering methods in the realm of the possibility theory. These different perspectives on the mentioned fuzzy FCA-based biclustering method can be useful in order to incorporate other types of information usually available for the considered data. 

\section{Conclusion}

Formal concepts can be studied and interpreted within many different
theories, where these types of clusters play an important role, e.g., in
multicriteria decision support, or in possibility theory. In this
contribution, we have focused on links between aggregation functions acting on
complete lattices and formal concept analysis. In particular, we have
elaborated a description of sup- (inf-) preserving aggregation functions, thus
generalizing several particular results known from the literature (such as the
structure of sup- and inf-preserving triangular norms and conorms on product
lattices characterized in \cite{BM}). We have also discussed possible applications of
our results  within certain biclustering fuzzy FCA-based methods. We believe
that our approach and examples of applications of the general methods of
formal concept analysis will expand to several new areas, offering them a
powerful tool.

\section*{Acknowledgment}

The first author was supported by the international project Austrian Science Fund (FWF)-Grant Agency of the Czech Republic (GA\v{C}R) no. 15-34697L; the second author by the Slovak Research and Development Agency under contract APVV-14-0013; the third author by the IGA project of the faculty of Science Palack\'y University Olomouc no. PrF2016006 and by the Slovak VEGA Grant no. 2/0044/16.


\begin{thebibliography}{00}

\bibitem{AL}
Alcalde C., Burusco A., Bustince H., Jurio A., Sanz J.A., Evolution in time of L-fuzzy context sequences, Information Sciences vol. 326, pp. 202--214, 2016.

\bibitem{AK}
Antoni L., Kraj\v{c}i S., Kr\'idlo O.,Macek B., Piskov\'a L., On heterogeneous formal contexts, Fuzzy Sets and Systems, vol. 234, pp. 22--33, 2014.

\bibitem{BPC}
Beliakov G., Pradera A., Calvo T., Aggregation Functions: A Guide for Practitioners, Studies in Fuzziness and Soft Computing 221, Springer, 2007.

\bibitem{T1}
Beliakov G.,Bustince  H., James  S., Calvo  T., Fernandez  J., Aggregation for Atanassov Intuitionistic and Interval Valued Fuzzy Sets: The Median Operator, IEEE Trans. Fuzzy Syst., vol. 20, no. 3, pp. 487--498, 2012.

\bibitem{B01}
B\v{e}lohl\'avek R., Lattices of Fixed Points of Fuzzy Galois Connections, Math. Log. Quart., vol. 47, no.1, pp. 111--116, 2001.

\bibitem{Blyth}
Blyth T.S., Lattices and Ordered Algebraic Structures, Springer-Verlag, London, 2005.

\bibitem{T2}
Bustince H., Madrid  N., Ojeda-Aciego  M., The Notion of Weak-Contradiction: Definition and Measures, IEEE Trans. Fuzzy Syst., vol. 23 , no. 4, pp. 1057--1069, 2015.

\bibitem{BuPPINS}
Butka P., P\'ocs J., P\'ocsov\'a J., On equivalence of conceptual scaling and generalized one-sided concept lattices, Information Sciences, vol. 259, pp. 57--70, 2014.

\bibitem{Couceiro}
Couceiro M., Marichal J.-L., Characterizations of discrete Sugeno integrals as polynomial functions over distributive lattices, Fuzzy Sets and Systems, vol. 161, pp. 694--707, 2010.

\bibitem{MedN3}
Cornejo M.~E., Medina J., Ram\'irez-Poussa E., On the use of irreducible elements for reducing multi-adjoint concept lattices, Knowl. Based. Syst., vol. 89, pp. 192--202, 2015.

\bibitem{BM}
De Baets B., Mesiar R., Triangular norms on product lattices, Fuzzy Sets and Systems, vol. 104, no.1, pp. 61--75, 1999.

\bibitem{Erne}
Ern\'e M., Koslowski J., Melton A., Strecker G.E., A Primer on Galois Connections, Annals of the New York Academy of Sciences 704, pp. 103--125, 1993. 

\bibitem{Kara2}
Ertu\u{g}rul \"U., Kesicio\u{g}lu M.N., Kara\c{c}al F., Ordering based on uninorms, Information Sciences, vol. 330, pp. 315--327, 2016.

\bibitem{GW}
Ganter B., Wille R., Formal concept analysis. Mathematical foundations, Springer, Berlin 1999.

\bibitem{GePo}
Georgescu G., Popescu A., Non-dual fuzzy connections, Arch. Math. Logic, vol. 43, Issue 8, pp. 1009--1039, 2004.

\bibitem{Grabisch et al 2009}
Grabisch M., Marichal J.-L., Mesiar R., Pap E., Aggregation Functions, Cambridge University Press, Cambridge, 2009.

\bibitem{Gratzer}
Gr\"atzer G., Lattice Theory: Foundations, Birkh\"auser, Basel, 2011.

\bibitem{Haj}
H\'{a}jek P., Metamathematics of Fuzzy Logic, Kluwer, Dordrecht, 1998.

\bibitem{HMP1}
Hala\v{s} R., Mesiar R., P\'{o}cs J., A new characterization of the discrete Sugeno integral, Information Fusion, vol. 29, pp. 84--86, 2016.

\bibitem{HMP3}
Hala\v{s} R., Mesiar R., P\'{o}cs J., Congruences and the discrete Sugeno integrals on bounded distributive lattices, Information Sciences vol. 367–368, pp. 443--448, 2016.

\bibitem{HMP2}
Hala\v{s} R., Mesiar R., P\'{o}cs J., Generators of aggregation functions and fuzzy connectives, IEEE Trans. Fuzzy Syst., 2016, DOI: 10.1109/TFUZZ.2016.2544353.

\bibitem{HP1}
Hala\v{s} R., P\'{o}cs J., On the clone of aggregation functions on bounded lattices, Information Sciences, vol. 329, pp. 381--389, 2016.

\bibitem{HP2}
Hala\v{s} R., P\'{o}cs J., On lattices with a smallest set of aggregation functions, Information Sciences, vol. 325, pp. 316--323, 2015.

\bibitem{HP3}
Hala\v{s} R., P\'{o}cs J., Generalized one-sided concept lattices with attribute preferences, Information Sciences, vol. 303, pp. 50--60, 2015.

\bibitem{T4}
Hernandez P., Cubillo S., Torres-Blanc C., On T-Norms for Type-2 Fuzzy Sets, IEEE Trans. Fuzzy Syst., vol. 23 , no. 4, pp. 1155--1163, 2015. 

\bibitem{Ifca}
Ignatov D.I., Kuznetsov S.O., Poelmans J., Concept-based Biclustering for Internet Advertisement, Proceedings - 12th IEEE International Conference on Data Mining Workshops, ICDMW 2012, pp. 123--130, 2012. 

\bibitem{Kara1}
Kara\c{c}al F., Khadjiev D., Internal direct product of integral $\vee$-distributive binary aggregation functions and the classification of all integral $\vee$-distributive binary aggregation functions of length 3, Information Sciences, vol. 298, pp. 22--35, 2015.

\bibitem{Kardos}
Kardo\v{s} F., P\'ocs J., P\'ocsov\'a J., On concept reduction based on some graph properties, Knowl. Based. Syst., vol. 93, pp. 67--74, 2016.

\bibitem{KMP}
Klement E.P., Mesiar R., Pap E., Triangular norms, vol. 8 of Trends in Logic-Studia Logica Library, Kluwer Academic Publishers, Dordrecht, 2000.

\bibitem{Kr1}  
Kraj\v{c}i S., A generalized concept lattice, Logic Journal of IGPL, vol.~13, no.~5, pp.~543--550, 2005.

\bibitem{Lei}
Lei Q., Xu Z., Bustince H., Fernandez J., Intuitionistic fuzzy integrals based on Archimedean t-conorms and t-norms, Information Sciences, vol. 327, pp. 57--70, 2016.

\bibitem{MedN2}
Medina J., Ojeda-Aciego M., Multi-adjoint t-concept lattices, Information Sciences, vol. 180, no. 5, pp. 712--725, 2010.

\bibitem{MedN1}
Medina J., Ojeda-Aciego M., On multi-adjoint concept lattices based on heterogeneous conjunctors, Fuzzy Sets and Systems, vol. 208, pp. 95--110, 2012.

\bibitem{MO}
Medina J., Ojeda-Aciego M., Dual multi-adjoint concept lattices, Information Sciences, vol. 225, pp. 47--54, 2013.

\bibitem{Med09}
Medina J., Ojeda-Aciego M., Ruiz-Calvi\~{n}o J., Formal concept analysis via multi-adjoint concept lattices, Fuzzy Sets and Systems, vol. 160, pp. 130--144, 2009.

\bibitem{MSY}
Mesiar R., Stup\v{n}anov\'{a} A., Yager R., Generalizations of OWA Operators, IEEE Trans. Fuzzy Syst., vol. 23 , no. 6, pp. 2154--2162, 2015. 

\bibitem{MY}
Mesiar R., Yager R., On the Transformation of Fuzzy Measures to the Power Set and Its Role in Determining the Measure of a Measure, IEEE Trans. Fuzzy Syst., vol. 23 , no. 4, pp. 842--849, 2015.

\bibitem{MZ1}
Mesiarov\'a-Zem\'ankov\'a A., Multi-polar t-conorms and uninorms, Information Sciences 301 pp. 227-240, 2015.

\bibitem{Ze1}
Mesiarov\'{a}-Zem\'{a}nkov\'{a} A., Continuous additive generators of continuous, conditionally cancellative triangular subnorms, Information Sciences, vol. 339, pp. 53--63, 2016.

\bibitem{MZ2}
Mesiarov\'a-Zem\'ankov\'a A., Hy\v{c}ko M., Aggregation on Boolean multi-polar space: Knowledge-based vs. category-based ordering, Information Sciences 309, pp. 163-179, 2015. 

\bibitem{Pocs}
P\'ocs J., Note on generating fuzzy concept lattices via Galois connections, Information Sciences, vol. 185, no. 1, pp. 128--136, 2012.

\bibitem{P3}
P\'ocsov\'a J., Note on formal contexts of generalized one-sided concept lattices, Annales Mathematicae et Informaticae, vol. 42, pp. 71--82, 2013.

\bibitem{STKPL}
Shi Y., Tian Y., Kou G., Peng Y., Li J., Optimization Based Data Mining: Theory and Applications, Advanced Information and Knowledge Processing, Springer, 2011.

\bibitem{Sug74}
Sugeno M., Theory of fuzzy integrals and its applications, Ph.D. Thesis, Tokyo Institute of Technology, Tokyo 1974.

\end{thebibliography}
\end{document}